\newtheorem{definition}{Definition}
\newtheorem{theorem}{Theorem}
\newtheorem{lemma}{Lemma}
\newtheorem{assumption}{Assumption}
\title{\LARGE \bf
Robust feedback-based quantum optimization: analysis of coherent control errors
}
\author{Mirko Legnini and Julian Berberich 
\thanks{This work was funded by Deutsche Forschungsgemeinschaft (DFG, German Research Foundation) under Germany’s Excellence Strategy-EXC 2075-390740016.We acknowledge the support by the Stuttgart Center for Simulation Science (SimTech).}
\thanks{Mirko Legnini,  University of Bologna, Bologna, Italy.
        }%
\thanks{Julian Berberich, University of Stuttgart, Institute for Systems Theory and Automatic Control,  70569 Stuttgart, Germany
        {\tt\small julian.berberich@ist.uni-stuttgart.de}}%
}
\begin{document}

\maketitle

\begin{abstract}
The Feedback-based Algorithm for Quantum Optimization (FALQON) is a Lyapunov inspired quantum algorithm proposed to tackle combinatorial optimization problems. 
In this paper, we examine the robustness of FALQON against coherent control errors, a class of multiplicative errors that affect the control input. 
We show that the algorithm is asymptotically robust with respect to systematic errors, and we derive robustness bounds for independent errors.  
Finally, we propose a robust version of FALQON which minimizes a regularized Lyapunov function. 
Our theoretical results are supported through simulations.  

\end{abstract}
\IEEEpubid{\begin{minipage}{\textwidth}\ \\[30pt] \\ \\
    \copyright  2025 IEEE. Personal use of this material is permitted. Permission from IEEE must be obtained for all other uses, in any current or future media, including reprinting/republishing this material for advertising or promotional purposes, creating new collective works, for resale or redistribution to servers or lists, or reuse of any copyrighted component of this work in other works.
\end{minipage}
}

\section{INTRODUCTION}
Quantum computing is an alternative computing technique that utilizes quantum phenomena to process information, potentially providing better solutions to problems that are considered intractable in classical computation \cite{bible, berberich2023quantumcomputinglenscontrol}.
Quantum algorithms are represented as circuits acting on information units called qubits (quantum bits). 
We are currently in what is known as the Noisy Intermediate Scale Quantum (NISQ) era \cite{Preskill_2018}. The real quantum hardware available has reached up to hundreds of qubits and circuit execution is still subject to multiple noise sources.  

A variety of algorithms that could prove a quantum advantage over classical algorithms have been proposed, most notably Shor's algorithm for factorization \cite{Shor_1997}.
In this work, we  focus on the application of quantum algorithms to optimization problems analyzed through a control theory perspective. In particular, we  address robustness problems on the execution of these algorithms on noisy hardware.
\\

The Quantum Approximate Optimization Algorithm (QAOA) \cite{farhi2014quantumapproximateoptimizationalgorithm} is an iterative algorithm proposed to solve combinatorial optimization problems with quantum systems. QAOA encodes a combinatorial optimization problem in a so-called problem Hamiltonian and uses an iterative optimization procedure to find its minimum eigenvalue, thus providing the optimal solution string. 
The main problem of this iterative approach is it can lead to complicacies due to the intrinsic structure of the optimization landscape. Barren plateaus are an important obstacle \cite{McClean2018}.  
A recent result \cite{kazi2024analyzingquantumapproximateoptimization} even suggests that, if barren plateaus are not present, the algorithm may be classically simulable. 
Possible issues that may be mitigated by other approaches include the presence of  local minima and saddle points, or the high computational cost of running iterative methods on NISQ hardware.
The Feedback-based Algorithm for Quantum Optimization (FALQON) \cite{MagannFALQON} is an alternative proposed to address these problems.
The main idea is to represent the cost function as a Lyapunov function defined on a quantum system encoding the decision variables, and to use a control law to steer the state to the minimum of this Lyapunov function.
Other works on the topic of combinatorial optimization involve the control of systems to excited states \cite{rahman2024feedbackbasedquantumalgorithmexcited, rahman2024weightedfeedbackbasedquantumalgorithm}, approaches to solve constrained optimization problems \cite{rahman2024} or different ans\"atze for the circuits \cite{Hadfield_2019, Wiersema_2020}.
\\

Furthermore, when run on real hardware, quantum algorithms are affected by errors due to noise. 
One can distinguish between coherent and decoherent errors, depending on whether they disrupt coherence or not. 
In this paper, we focus on coherent control errors, which are modeled as multiplicative perturbations on Hamiltonian evolutions. Previous works address coherent control errors using Lipschitz bounds to find an upper bound on the fidelity of the noisy algorithm output \cite{Berberich:2023zau,funcke2024robustnessoptimalquantumannealing}.

In this paper, we provide a theoretical analysis of the robustness of FALQON, finding that the asymptotic convergence guarantees for the nominal case still hold in presence of systematic coherent control errors. In our theoretical results, we relax the assumptions needed for FALQON to solve combinatorial optimization problems with non-unique optimal solutions. 
Additionally, we derive a robustness bound for independent coherent control errors which leads to the design of a robustified FALQON based on a regularized Lyapunov function. 
The theoretical results are accompanied by simulations ran on a classical computer. 

\section{Preliminaries}
\subsection{Combinatorial Optimization setup}
Combinatorial optimization is a class of optimization problems where the set of feasible solutions is discrete.
In this paper, we consider problems of the form
 \begin{equation}
    \min_{x \in {\{0,1\}}^n} l(x), \label{discropt}
 \end{equation}
 where $l(x)$ is the cost function and the decision variable $x$ represents the bit string corresponding to elements in the set. 

In this work, we  use the standard quantum computing notation outlined in \cite{bible}.
In order to tackle problem \eqref{discropt} through quantum computing techniques, the first step is to encode the cost function in a problem Hamiltonian defined as follows: 
\begin{definition}
    Given a decision variable $x \in {\{0,1\}}^n$, a cost function $l:{\{0,1\}}^n \mapsto \mathbb{R}$, and a qubit string $\ket{x} \in (\mathbb{C}^{2})^{\otimes n}$  a problem Hamiltonian $H_{\mathrm{p}} \in \mathbb{H}^{2^n \times 2^n}$,  is a matrix such that
    \begin{equation}
        l(x)\ket{x}=H_{\mathrm{p}} \ket{x}
    \end{equation} 
    for each $x \in {\{0,1\}}^n$.  
\end{definition}
The set $\mathbb{H}^{2^n \times 2^n}$ is defined as $\mathbb{H}^{2^n \times 2^n} :=\{H \in \mathbb{C}^{2^n \times 2^n}: H^\dagger=H\}$.

Problem Hamiltonians are diagonal in the computational basis. Ways to build the problem Hamiltonian with polynomial complexity are described in \cite{Lucas_2014} and \cite{Hadfield_2021}. 
It is immediate to see that, under this formulation, the optimal solution string for \eqref{discropt} is the eigenvector associated to the lowest eigenvalue of $H_{\mathrm{p}}$, hereafter referred to as ground state of $H_{\mathrm{p}}$. 
Solving the optimization problem \eqref{discropt} is therefore equivalent to solving 
\begin{equation}
    \min_{\ket{x} \in (\mathbb{C}^{2})^{\otimes n}} \bra{x} H_{\mathrm{p}} \ket{x} \label{eqproblem}.
\end{equation}
\subsection{FALQON}
FALQON was proposed in \cite{MagannFALQON} to solve combinatorial optimization problems without the complicacies that come with Variational Quantum Algorithms (VQAs), in particular the problem of barren plateaus.
It relies on quantum Lyapunov control (QLC) theory and applies it to the optimization setup.

In the following, we introduce the basic idea of FALQON. We begin by describing the evolution of a quantum system with the Schr\"odinger equation, namely
\begin{equation}
    i\dfrac{d}{dt} \ket{\psi(t)}=(H_{\mathrm{p}} + \beta(t)H_{\mathrm{d}})\ket{\psi(t)}, \label{Schrodinger}
\end{equation}
where $H_{\mathrm{p}} \in \mathbb{H}^{2^n \times 2^n}$ is the problem Hamiltonian, $H_{\mathrm{d}} \in \mathbb{H}^{2^n \times 2^n}$ such that $[H_{\mathrm{p}},H_{\mathrm{d}}]\ne 0$ is the driver Hamiltonian,
and $\beta(t)\in \mathbb{R}$ is the input signal. The Hamiltonians have been normalized such that Planck's constant $\hbar=1$. 

Using the cost function in \eqref{eqproblem} as a Lyapunov function $V: (\mathbb{C}^{2})^{\otimes n} \mapsto \mathbb{R}$, and replacing the decision variable $\ket{x}$ by the state $\ket{\psi(t)}$, we get
\begin{equation}
    V(\ket{\psi(t)})= \bra{\psi(t)} H_{\mathrm{p}} \ket{\psi(t)}. \label{lyapunov_function}
\end{equation}

The core idea of FALQON is to choose an input $\beta(t)$ such that $\dfrac{d}{dt}V(\ket{\psi(t)})<0$ to minimize the cost. 

We begin by taking the derivative of the Lyapunov function with respect to time. 
Using the chain rule we get:
\begin{equation}
    \dfrac{d}{dt}V(\ket{\psi(t)}) = A(t)\beta(t), A(t)=\bra{\psi(t)} i[H_{\mathrm{d}},H_{\mathrm{p}}] \ket{\psi(t)}.
\end{equation}
In order to render the derivative negative we can pick $\beta(t)$ in any way that satisfies the condition 
\begin{equation}
    \beta(t)=-w f(A(t),t) \label{input_law},
\end{equation}
with $w>0$, $f(0,t)=0$ and $f(A(t),t)A(t)>0$, for each $A(t)\ne 0$.
The resulting dynamic, updated with the feedback law \eqref{input_law}, results in
\begin{equation}
    i\dfrac{d}{dt} \ket{\psi(t)}=(H_{\mathrm{p}} - w f(A(t))H_{\mathrm{d}})\ket{\psi(t)}. \label{feedbackdynamics}
\end{equation}
In order to implement the algorithm, one constructs a circuit by simulating the discrete time evolution via a Trotter decomposition \cite[Theorem 4.3]{bible}. Consider the unitary gates
\begin{equation}
    U_\mathrm{p}=e^{-i H_{\mathrm{p}} \Delta t}
\end{equation}
and
\begin{equation}
    U_\mathrm{d}(\beta_t)=e^{-i\beta_t H_{\mathrm{d}} \Delta t}.
\end{equation}

We can approximate the discrete-time evolution of the continuous-time system \eqref{feedbackdynamics} via
\begin{equation}
    \ket{\psi_{t+1}}=U_\mathrm{d}(\beta_t) U_\mathrm{p} \ket{\psi_t} \label{dtschroedinger}, \beta_t=-w f(A_t), 
\end{equation}
with $A_t=\bra{\psi_t} i[H_{\mathrm{d}},H_{\mathrm{p}}] \ket{\psi_t}$.

The circuit is repeatedly run and measured, and the bit string corresponding to the measured state with the lowest energy is selected as the optimal string for the combinatorial optimization problem.
\begin{algorithm}
    \caption{FALQON}
    \label{FALQON_PSEUDOCODE}
    \begin{algorithmic}
        \State 1. \textbf{initialize} $H_{\mathrm{p}}$, $H_{\mathrm{d}}$, $L$, $\Delta t$
        \State 2. initialize the first input \\ \hspace{1cm} $\beta_1 \gets 0$
        \State 3. $t \gets 1$
        \State 4. initialize the state \\ \hspace{1cm} $\ket{\psi_0} \gets \frac{1}{\sqrt{2^n}}\sum_{x} \ket{x}$
        \State 5. implement first layer \\ \hspace{1cm} $\ket{\psi}_1 \gets U_\mathrm{d}(\beta_1)U_\mathrm{p} \ket{\psi_0}$
        \State 6. determine $A_1$ by repeatedly running the circuit \eqref{dtschroedinger}\\ \hspace{1cm} $A_1 \gets \bra{\psi_1} i[H_{\mathrm{d}},H_{\mathrm{p}}] \ket{\psi_1}$
        \State 7. $\beta_2 \gets -A_1$
        \While{$t<L$}
        \State 8. $t \gets t+1$
        \State 9. initialize the state \\ \hspace{1cm} $\ket{\psi_0} \gets \frac{1}{\sqrt{2^n}}\sum_{x} \ket{x}$
        \State 10. implement circuit \\ \hspace{1cm} $\ket{\psi}_t \gets (\prod_{\tau=1}^{t}U_\mathrm{d}(\beta_\tau)U_\mathrm{p}) \ket{\psi_0}$
        \State 11. determine $A_t$ by repeatedly running the circuit \eqref{dtschroedinger}\\ \hspace{1cm} $A_{t} \gets \bra{\psi_{t}} i[H_{\mathrm{d}},H_{\mathrm{p}}] \ket{\psi_{t}}$
        \State 12. $\beta_{t+1} \gets -A_{t}$
        \EndWhile
        \State return $\{\beta_\tau\}_{\tau=1}^L$
    \end{algorithmic}
\end{algorithm}

Various extensions of this algorithm are discussed in \cite{MagannFALQON}, including the multi-input case, the addition of reference perturbation to the control input and the use of iterative schemes to improve the input selection. 
The work also suggests that the algorithm is not only useful on its own but can also be used as an alternative way to initialize a QAOA.

\subsection{Errors and Robustness Measures}
In the following, we  introduce the noise sources we  deal with in this work and the metrics that  be used to study the robustness of FALQON. 
First, we define coherent control errors. 

A coherent control error acting on a quantum gate $\hat{U}=e^{-iH}$ causes an error in the form 
\begin{equation}
    U(\varepsilon)=e^{-(1+\varepsilon) i H}.
\end{equation}

This kind of error could model uncertainties in the control, such as, for example, some miscalibration in the time duration of the input pulse \cite{Berberich:2023zau}.  
The metric we use to measure robustness is the fidelity.
If we now consider an ideal quantum circuit $\hat{\mathbf{U}}=\hat{U}_1\cdots\hat{U}_l,$ with  $\hat{U}_1,\cdots,\hat{U}_l \in \mathbb{U}^n$, and a noise vector $\varepsilon \in \mathbb{R}^l$,
we can define a perturbed quantum circuit $\mathbf{U}(\varepsilon)=U_1(\varepsilon_1)\cdots U_l(\varepsilon_l)$.
The fidelity between the outputs of the two circuits can be defined as follows:
\begin{definition}
    \label{robound}
    Given an ideal output $\hat{\ket{\psi}}=\hat{\mathbf{U}}\ket{\psi_0}$ and a perturbed output ${\ket{\psi(\varepsilon)}}={\mathbf{U}(\varepsilon)}\ket{\psi_0}$ 
    the fidelity $\mathcal{F}(\varepsilon), \mathcal{F}:\mathbb{R}^l \mapsto [0,1]$ is defined as:
    \begin{equation}
        \mathcal{F}(\varepsilon)=|\braket{\hat\psi | \psi(\varepsilon)}|.
    \end{equation} 
\end{definition}
Since quantum states are unit vectors the maximum fidelity is $\mathcal{F}=|\braket{\hat\psi | \hat\psi}|=1$, meaning the noisy state $\ket{\psi(\varepsilon)}$ is equal to the ideal state $\ket{\hat{\psi}}$ up to a global phase difference.

In this work, we  employ this robustness measure to better understand the performance of FALQON in presence of coherent control errors. 

Furthermore, measuring the output of a quantum circuit requires running it multiple times and taking a sample average of the measured outputs. 
In order to describe the different possible behaviors coherent control errors could have in this setup, we  distinguish between systematic and independent coherent control errors. 
We call a coherent control error sequence systematic if it remains constant over each run of the circuit. 
Referring to the notation used in step 10 of algorithm \ref{FALQON_PSEUDOCODE}, given a sequence $\{\varepsilon^t\}_{t=1}^L$, with $\varepsilon^t: \{1,\dots, t\} \rightarrow \mathbb{R}$, the error sequence is  systematic if $\varepsilon^{t_1}(\tau)=\varepsilon^{t_2}(\tau)$ for each $t_1,t_2 \in \{1,\dots,L\}$ and for each $\tau \in \{1 ,\dots, \min\{t_1, t_2\}\}$.
We call the sequence independent if it is not systematic, implying a different error signal in each iteration $t$. 
This distinction is of particular importance for FALQON: systematic errors can be compensated by the feedback structure of the algorithm while, in case of independent error, the input signal is affected by some time-varying error signal. 
We  elaborate the implications of this distinction further in the next section.

\section{ROBUSTNESS OF FALQON}
This section develops an analysis of FALQON's robustness properties under coherent control errors. The closed-loop FALQON dynamics under the effect of such an error can be stated as: 

\begin{equation}
    i\dfrac{d}{dt} \ket{\psi(t)}=(1+\varepsilon(t))(H_{\mathrm{p}} - w f(A(t))H_{\mathrm{d}})\ket{\psi(t)} \label{feedbackCCE}.
\end{equation}

In subsection A, we study the asymptotic convergence of FALQON under systematic coherent control errors.
Then, in subsection B, we derive a robustness bound for it with respect to independent coherent control errors. 
Finally, in subsection C, we take advantage of our results to design a robustified input law to tackle such errors. 
Subsections A and C address the problems in the continuous-time QLC setup, while subsection B addresses the discrete time implementation of FALQON directly. 
This is done for ease of presentation. In particular, the results from subsections A and C can be readily transferred to discrete time, and results analogous to subsection B can be shown in continuous time following \cite{funcke2024robustnessoptimalquantumannealing}.
\subsection{Asymptotic Convergence for Systematic Errors}
In the following, we  study the convergence properties of the noisy continous-time dynamics of the FALQON algorithm, namely  \eqref{feedbackCCE}.
First, we  outline the hypotheses needed for convergence in the noiseless case. We  write $p_i$ and $q_i$ for the i-th eigenvalue and eigenvector of $H_{\mathrm{p}}$, respectively, written in increasing order ($p_0$ being the lowest). 
\begin{assumption}
    Consider the following assumptions: 
    \begin{itemize}
        \item[(a)] $H_{\mathrm{p}}$ has no degenerate eigenvalues, i.e., $p_i \ne p_j$ for $i \ne j$.
        \item[(b)]$H_{\mathrm{p}}$ has no degenerate eigenvalue gaps, i.e., $\omega_{ji} \ne \omega_{lk}$ for $(i, j) \ne (k, l)$, where $\omega_{ji} = q_j-q_i$ is the gap between the i-th and j-th eigenvalues of $H_{\mathrm{p}}$.
        \item[(c)]$H_{\mathrm{d}}$ is fully connected, i.e., $\bra{q_i} H_{\mathrm{d}} \ket{q_j} \ne 0$ for $i \ne j$.
        \item[(d)]The initial condition has an energy lower than the energy of the first excited state, i.e. \\$V(\ket{q_0}) < V(\ket{\psi(t=0)}) < V(\ket{q_1})$.        
    \end{itemize}
    \label{A1A4}
\end{assumption} 

\begin{theorem}
    Let $\ket{\psi(t)}$ be the quantum state evolving under the dynamics given in \eqref{feedbackCCE}, and let assumptions \ref{A1A4} (a-d) hold. 
    Assume also $|\varepsilon(t)|<\bar{\varepsilon}$ for all $t>0$, with some $\bar{\varepsilon} < 1$, and assume the error signal $\varepsilon(t)$ to be a systematic error. Then 
    \begin{equation}
        \lim_{t \to \infty} \ket{\psi(t)}=\ket{q_0}.
    \end{equation}
    \label{asymptotic_convergence}
\end{theorem}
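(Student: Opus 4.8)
The plan is to exploit the fact that a \emph{systematic} coherent control error enters the closed-loop dynamics \eqref{feedbackCCE} only through the strictly positive scalar $(1+\varepsilon(t))$, and therefore cannot alter the sign of the Lyapunov derivative. I would first recompute the time derivative of $V(\ket{\psi(t)})=\bra{\psi(t)}H_{\mathrm{p}}\ket{\psi(t)}$ along \eqref{feedbackCCE}. Inserting the Schr\"odinger equation and using $H_{\mathrm{p}}^\dagger=H_{\mathrm{p}}$, $H_{\mathrm{d}}^\dagger=H_{\mathrm{d}}$, the $H_{\mathrm{p}}$-term commutes with itself and cancels, leaving
\begin{equation}
    \dfrac{d}{dt}V(\ket{\psi(t)}) = -(1+\varepsilon(t))\,w\,f(A(t))\,A(t),
\end{equation}
with $A(t)=\bra{\psi(t)}i[H_{\mathrm{d}},H_{\mathrm{p}}]\ket{\psi(t)}$. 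Since $|\varepsilon(t)|<\bar\varepsilon<1$ yields $1+\varepsilon(t)>1-\bar\varepsilon>0$, while $w>0$ and $f(A)A>0$ for $A\ne 0$ by the feedback design \eqref{input_law}, we get $\frac{d}{dt}V\le 0$, with equality if and only if $A(t)=0$. The error thus only rescales the rate of decrease, never its direction.

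The key step is then a time reparametrization reducing the noisy system to the nominal one. Setting $s(t)=\int_0^t(1+\varepsilon(\tau))\,d\tau$, which is strictly increasing because $1+\varepsilon>0$, and defining $\ket{\phi(s)}=\ket{\psi(t(s))}$, a direct substitution shows that $\ket{\phi(s)}$ obeys exactly the noiseless closed-loop dynamics \eqref{feedbackdynamics}, since the factor $(1+\varepsilon)$ is divided out and $A$, $f(A)$ depend only on the instantaneous state. Because $1+\varepsilon(t)>1-\bar\varepsilon$, we have $s(t)\ge(1-\bar\varepsilon)t\to\infty$, so $\lim_{t\to\infty}\ket{\psi(t)}=\lim_{s\to\infty}\ket{\phi(s)}$ whenever the latter exists. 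It therefore suffices to prove convergence of the \emph{nominal} FALQON flow to $\ket{q_0}$ under Assumption \ref{A1A4}.

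For the nominal limit I would invoke LaSalle's invariance principle on the compact projective state space, on which $V$ is well-defined and phase-invariant. As $V$ is non-increasing, every trajectory approaches the largest invariant set $M\subseteq\{A=0\}$. Along any invariant trajectory $A\equiv 0$; differentiating this identity repeatedly along \eqref{feedbackdynamics} and using the non-degenerate gaps (b) and full connectivity (c) forces the state to be an eigenstate $\ket{q_i}$ of $H_{\mathrm{p}}$, and by (a) these are isolated points. The $\omega$-limit set, being connected and contained in this discrete set, is a single eigenstate, so $\ket{\phi(s)}\to\ket{q_i}$ for some $i$. Finally, monotonicity of $V$ and Assumption (d) give $V(\ket{\phi(s)})<V(\ket{q_1})=p_1$ for all $s$; since the limit energy equals $p_i$, we must have $p_i<p_1$, hence $i=0$, yielding $\ket{\phi(s)}\to\ket{q_0}$.

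I expect the main obstacle to be this nominal LaSalle step rather than the error analysis: one must make the invariance argument rigorous on the projective sphere and, above all, characterize the invariant set $M$ by showing that $A\equiv 0$ (together with the vanishing of its successive derivatives) implies the state is a single eigenstate, which is exactly where the non-degenerate gaps (b) and full connectivity (c) of $H_{\mathrm{d}}$ are needed. The reparametrization neatly concentrates all the error-specific difficulty into the single observation $1+\varepsilon(t)>0$, so the remaining work coincides with the noiseless convergence proof; one need only check that any explicit $t$-dependence permitted in $f(\cdot,t)$ by \eqref{input_law} still satisfies the sign condition uniformly in the reparametrized time, which it does since $f(A,t)A>0$ holds for every $t$.
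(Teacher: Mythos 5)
Your proposal is correct, but it reaches the conclusion by a genuinely different decomposition than the paper. The paper never reparametrizes time: it carries the error all the way through a LaSalle-type invariance analysis of the noisy system itself. Concretely, it computes the same Lyapunov derivative $-w(1+\varepsilon(t))A(t)f(A(t))$, then studies the invariant set $\{A=0\}$ of the \emph{noisy} dynamics, whose trajectories are $e^{-i\eta(t)H_{\mathrm{p}}}\ket{\bar\psi(0)}$ with $\eta(t)=\int_0^t(1+\varepsilon(\tau))d\tau$; substituting into $\bra{\bar\psi(t)}[H_{\mathrm{p}},H_{\mathrm{d}}]\ket{\bar\psi(t)}=0$ yields a sum of exponentials $e^{-i\omega_{ij}\eta(t)}$ whose linear independence (using assumption (b) and the fact that $\bar\varepsilon<1$ makes $\eta$ strictly increasing) forces $c_ic_j^*=0$ for $i\ne j$, hence an eigenstate, and assumption (d) selects $\ket{q_0}$. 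Your route instead concentrates all error handling into the substitution $s(t)=\int_0^t(1+\varepsilon(\tau))d\tau$, after which $\ket{\phi(s)}=\ket{\psi(t(s))}$ satisfies exactly the nominal closed-loop equation \eqref{feedbackdynamics}, and the rest is the noiseless Grivopoulos-style argument. The two invariant-set characterizations are mathematically equivalent (linear independence of exponentials with distinct gap frequencies versus your repeated-differentiation/Vandermonde variant), and both approaches need $\bar\varepsilon<1$ for the same underlying reason — strict monotonicity of the accumulated time $\eta$ (your $s$). What your version buys is modularity and rigor: the reparametrized system is autonomous, so LaSalle's invariance principle applies directly on the projective space, whereas the paper applies the invariance argument to a time-varying system without comment; you also correctly observe that convergence can only hold up to global phase, which the paper glosses over. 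What the paper's version buys is that its central identity \eqref{griveps} is stated for the noisy system and is then reused verbatim in Theorem~2 to handle degenerate eigenvalues, an extension your reduction would also support but would have to redo at the nominal level.
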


\begin{proof}
The proof follows the approach taken for the noiseless case \cite{Grivopoulos} with additional considerations due to the presence of noise.
By differentiating the Lyapunov function $V(\ket{\psi(t)})$ with respect to time we obtain:
\begin{align}
    & \dfrac{d}{dt}V(\ket{\psi(t)}) \nonumber \\
    &=\bra{\psi(t)} i(1+\varepsilon (t))[H_{\mathrm{p}}-wf(A(t))H_{\mathrm{d}},H_{\mathrm{p}}] \ket{\psi(t)} \nonumber \\
    &= \bra{\psi(t)} i(1+\varepsilon (t))[-wf(A(t))H_{\mathrm{d}},H_{\mathrm{p}}] \ket{\psi(t)} \nonumber \\
    &= -w\bra{\psi(t)} i[H_{\mathrm{d}},H_{\mathrm{p}}] \ket{\psi(t)}(1+\varepsilon (t))f(A(t)) \nonumber\\
    &= -w(1+\varepsilon (t))A(t)f(A(t)).
\end{align}
According to \eqref{input_law}, the derivative of the Lyapunov function is $0$ if and only if $A(t)=0$.
Assuming $A(t)=0$, the dynamics in \eqref{feedbackCCE} reduce to an autonomous ordinary differential equation that we can write as 
\begin{equation}
    i \dfrac{d}{dt}\ket{\psi(t)}= (1+\varepsilon(t))H_{\mathrm{p}} \ket{\psi(t)}. \label{autonomousdynamics}
\end{equation}
The (forward invariant) solution trajectory for \eqref{autonomousdynamics}, $\ket{\bar\psi(t)}$, can be expressed as 
\begin{equation}
    \ket{\bar{\psi}(t)}=e^{-i\eta(t) H_{p}} \ket{\bar{\psi}(0)}=\sum_{i=1}^{N} c_i e^{-i\eta(t) p_i} \ket{q_i} \label{fitraj},
\end{equation}
where $\eta(t)=\int_{0}^{t} (1+\varepsilon (\tau)) d\tau$, and $c_i$ is the complex coefficient associated to the eigenvector $q_i$ representing $\ket{\bar\psi(t)}$ as a linear combination of the computational basis.
Furthermore, since the input is chosen as per \eqref{input_law}, requiring $\beta(t)=0$ is equivalent to imposing
\begin{equation}
    \bra{\bar{\psi}(t)}[H_{\mathrm{p}}, H_{\mathrm{d}}]\ket{\bar{\psi}(t)}=0 \label{fiinput}.
\end{equation}
We can now substitute \eqref{fitraj} in \eqref{fiinput} and obtain
\begin{equation}
    -i(1+\varepsilon(t))\sum_{i,j=1}^{N} (p_i - p_j) c_i c_j^* e^{-i  \omega_{i,j} \eta(t)} \bra{q_j} H_{\mathrm{d}} \ket{q_i} = 0, \label{griveps}
\end{equation}
for all $t>0$.

The exponential trajectories $e^{-i  \omega_{i,j} \eta(t)}$ are always linearly independent for each $(i,j)$ pair, due to assumption 1 (b) and $\eta(t)$ being the same for all the pairs. To guarantee this we also impose $\bar{\varepsilon} < 1$, so that $\eta(t)>0$.
Assumption 1 (b) and (c) ensure that none of the terms in the sum on the left hand side of \eqref{griveps} are identically zero independently from the trajectory $\ket{\bar{\psi}(t)}$. 
Imposing these conditions, a necessary and sufficient condition for \eqref{griveps} to be identically $0$ is $c_i c^*_j=0, \forall i,j < N$.
In order to satisfy this condition only one coefficient can be non-zero, namely $\exists! i < N | c_i \ne 0$.
The last statement is true only for eigenvectors of $H_{\mathrm{p}}$. 
Furthermore, since the cost function \eqref{lyapunov_function} is non-increasing and due assumption 1 (d), the system converges to the ground state.
\end{proof}
The result proves that in the noisy case we have the same convergence guarantees as in the noise-free case. Intuitively, since we are assuming a systematic multiplicative error, the feedback structure of the algorithm is able to mitigate it and cancel it as the nominal input goes to zero.

In view of the proof, we can see that assumption 1 (a) is needed to prove convergence to a specific eigenvector. 
In practical examples, however, it can happen that multiple different eigenvectors lead to the same cost. 
If this is the case, we can prove convergence for the algorithm to the 
global minimum of the cost, which is sufficient for us to solve the combinatorial optimization problem. 
We state and prove the theorem for the error-affected dynamics, but it is trivial to extend it to the noiseless case as it can be treated as a special case in which $\bar{\varepsilon} = 0$.

\begin{theorem}
    Let $\ket{\psi(t)}$ be a quantum state evolving under the dynamics given in \eqref{dtschroedinger}, and let assumptions \ref{A1A4}, (b-d) hold. 
    Assume also $\bar{\varepsilon} < 1$. Then 
    \begin{equation}
        \lim_{t \to \infty} \bra{\psi(t)} H_\mathrm{p}\ket{\psi(t)}=p_0.
    \end{equation}
\end{theorem}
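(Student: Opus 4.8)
The plan is to follow the proof of Theorem~\ref{asymptotic_convergence} almost verbatim and to depart only at the final step, where the absence of assumption~\ref{A1A4}~(a) prevents us from isolating a single eigenvector. First I would reuse the differentiation of the Lyapunov function to obtain $\frac{d}{dt}V(\ket{\psi(t)}) = -w(1+\varepsilon(t))A(t)f(A(t)) \le 0$, where non-positivity follows from $\bar\varepsilon < 1$ (so that $1+\varepsilon(t) > 0$) together with the sign condition $A f(A) \ge 0$ built into the input law \eqref{input_law}. Hence $V$ is non-increasing and bounded below by $p_0$, so it converges, and an invariance argument shows that the trajectory approaches the largest invariant set contained in $\{A = 0\}$. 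On this set the dynamics reduce to the autonomous equation \eqref{autonomousdynamics}, and imposing $A \equiv 0$ along it yields exactly equation \eqref{griveps}.

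The key new step is to analyze \eqref{griveps} without the non-degeneracy hypothesis. I would split the double sum into pairs with $p_i = p_j$ and pairs with $p_i \ne p_j$. The degenerate pairs contribute nothing, since each carries the prefactor $(p_i - p_j) = 0$. For the remaining pairs the gaps $\omega_{ij}$ are nonzero and, by assumption~\ref{A1A4}~(b) read among distinct eigenvalues, pairwise distinct, so the exponentials $e^{-i\omega_{ij}\eta(t)}$ are linearly independent (using that $\eta(t)$ is common to all pairs and strictly increasing because $\bar\varepsilon < 1$). Together with full connectivity, assumption~\ref{A1A4}~(c), this forces $c_i c_j^* = 0$ whenever $p_i \ne p_j$. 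Consequently every invariant trajectory has nonzero coefficients supported on a single eigenvalue, i.e. it lies entirely in one eigenspace of $H_{\mathrm{p}}$, and its energy equals that eigenvalue $p_k$.

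Finally I would pin down $k = 0$ using monotonicity and assumption~\ref{A1A4}~(d). Since $V$ is non-increasing, its limit satisfies $V_\infty \le V(\ket{\psi(0)}) < V(\ket{q_1}) = p_1$. Because the trajectory approaches the invariant set on which the energy is some eigenvalue $p_k$ and $V$ is continuous, $V_\infty = p_k$; as $p_0$ is the only eigenvalue strictly below $p_1$, we conclude $p_k = p_0$ and hence $\lim_{t\to\infty}\bra{\psi(t)}H_{\mathrm{p}}\ket{\psi(t)} = p_0$.

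I expect the main obstacle to be the careful handling of the degenerate terms and the reinterpretation of assumption~\ref{A1A4}~(b): one must verify that the vanishing of the prefactor $(p_i - p_j)$ exactly cancels the pairs whose zero gaps would otherwise break the linear-independence argument, so that only pairs of \emph{distinct} eigenvalues remain and these indeed have distinct gaps. A secondary subtlety is justifying the invariance-principle step for the time-varying noisy system; here the systematic-error structure and the fact that $\eta(t)$ is a strictly monotone reparametrization of time (guaranteed by $\bar\varepsilon < 1$) are what allow the stationary-trajectory analysis to go through, and one obtains only energy convergence rather than state convergence, since the state within a degenerate eigenspace is left undetermined.
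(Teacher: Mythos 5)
Your proposal is correct and follows essentially the same route as the paper's own proof: both analyze equation \eqref{griveps}, observe that pairs with $p_i = p_j$ contribute nothing because of the prefactor $(p_i - p_j)$, use linear independence of the exponentials together with full connectivity to force $c_i c_j^* = 0$ for pairs with distinct eigenvalues, and then invoke monotonicity of $V$ together with assumption \ref{A1A4}~(d) to conclude that the single surviving eigenspace is the ground one. If anything, your treatment is slightly more careful than the paper's, since you make explicit the invariance-principle step and the necessary reinterpretation of assumption \ref{A1A4}~(b) as a condition on gaps between \emph{distinct} eigenvalues, without which that assumption would be violated by any degenerate spectrum.
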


\begin{proof}
    Consider the equality \eqref{griveps}. For each pair $i,j$ in the summation, in order for it to be 0, at least one of the following must be true:
    \begin{subequations}
    \label{cond:conds}
    \begin{align}
        & p_i-p_j=0, \label{cond:1} \\
        & c_i^\star c_j =0. \label{cond:2}
    \end{align}
    \end{subequations}
    If assumption 1 (a) is false, there is an eigenvalue $p^\star$ such that at least two eigenvectors are associated to it.
    For all eigenvector pairs where both eigenvectors are associated to $p^\star$, condition \eqref{cond:1} holds, so \eqref{cond:2} is not necessary. Contrarily, for all eigenvectors pairs where at least one is not associated to $p^\star$, condition \eqref{cond:2} must hold. 
    
    This means that all the non-zero components of $\ket{\psi(t)}$ are eigenvectors associated to the eigenvalue $p^\star$.
    If assumption 1 (d) holds, and since we proved that the cost \eqref{lyapunov_function} is non-increasing, the only possible value for $p^\star$ is $p^\star=p_0$.
    Since $\ket{\psi(t)}$ is a unit vector and all its components are associated to $p_0$, this proves the result.  
\end{proof}

We stated and proved the theorem for the error-affected dynamics, but it is trivial to extend it to the noiseless case as it can be treated as a special case in which $\bar{\varepsilon} = 0$. 
This also relaxes the required assumptions for FALQON outlined in \cite{MagannFALQON}.

\subsection{Robustness Bound for Independent Errors}
Theorem 1 and 2 only hold if we assume systematic noise from the system. It can be shown that, if we consider independent noise, the feedback argument used to justify convergence for FALQON under systematic errors no longer applies.
However, we can find bounds for the impact of independent coherent control errors on the final state. 
In the following section, we  find an upper bound on the error on the final state and use it to design a robustified input law.

We  now examine \eqref{feedbackCCE}, assuming an independent coherent control error signal, in order to find a fidelity lower bound. 
Due to the multiple executions required to build each FALQON layer, it may not always be possible to assume fixed error trajectories. Indeed, the work \cite{9951287} shows that time varying errors are a relevant problem on currently available quantum hardware. 
Our main result is the following lemma: 
\begin{lemma}
    Let $\ket{\hat{\psi}}$ be a noise-free output for a FALQON circuit with depth $l$. 
    Given a scalar $\bar{\varepsilon}>0$ and a noise vector $\varepsilon \in \mathbb{R}^l$ such that $||\varepsilon||_\infty < \bar{\varepsilon}$, $\varepsilon$ representing an independent coherent control error sequence, let $\ket{\psi(\varepsilon)}$ be the output of the circuit affected by noise.
    The fidelity can be bounded as 
    \begin{equation}
        |\braket{\psi(\varepsilon)| \hat\psi}| \ge 1-\frac{L_{\mathrm{FALQON}}^2\bar\varepsilon^2}{2},
    \end{equation}
    
    where 
    \begin{equation}
        L_{\mathrm{FALQON}}= \sum_{t=1}^{l} \Delta t ||H_{\mathrm{p}} + \beta_t H_{\mathrm{d}}||_2 \label{falqbound},
    \end{equation}
\end{lemma}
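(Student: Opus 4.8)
The plan is to reduce the fidelity bound to an operator-norm estimate on the deviation between the ideal and the perturbed circuit, and then convert that estimate into a statement about the overlap of the two output states. Writing the $t$-th layer of the ideal circuit as $\hat U_t = e^{-i(H_{\mathrm{p}} + \beta_t H_{\mathrm{d}})\Delta t}$ and the corresponding perturbed layer as $U_t(\varepsilon_t) = e^{-i(1+\varepsilon_t)(H_{\mathrm{p}} + \beta_t H_{\mathrm{d}})\Delta t}$, consistent with the multiplicative error model in \eqref{feedbackCCE}, the full circuits read $\hat{\mathbf{U}} = \hat U_l \cdots \hat U_1$ and $\mathbf{U}(\varepsilon) = U_l(\varepsilon_l)\cdots U_1(\varepsilon_1)$. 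Since $\ket{\hat\psi} = \hat{\mathbf{U}}\ket{\psi_0}$ and $\ket{\psi(\varepsilon)} = \mathbf{U}(\varepsilon)\ket{\psi_0}$ with $\ket{\psi_0}$ a unit vector, it suffices to control $\|\mathbf{U}(\varepsilon) - \hat{\mathbf{U}}\|_2$.

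First I would estimate the deviation of a single layer. Factoring $U_t(\varepsilon_t) = e^{-i\varepsilon_t \Delta t (H_{\mathrm{p}} + \beta_t H_{\mathrm{d}})}\hat U_t$ and using unitary invariance of the spectral norm gives $\|U_t(\varepsilon_t) - \hat U_t\|_2 = \|e^{-i\varepsilon_t \Delta t (H_{\mathrm{p}} + \beta_t H_{\mathrm{d}})} - I\|_2$. Because $\varepsilon_t \Delta t (H_{\mathrm{p}} + \beta_t H_{\mathrm{d}})$ is Hermitian, its exponential is unitary with spectrum on the unit circle, so a spectral decomposition yields the elementary bound $\|e^{-iA} - I\|_2 \le \|A\|_2$ for any Hermitian $A$ (via $2|\sin(\lambda/2)| \le |\lambda|$ eigenvalue-wise). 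Applied here, together with $|\varepsilon_t| \le \|\varepsilon\|_\infty < \bar\varepsilon$, this gives the per-layer estimate $\|U_t(\varepsilon_t) - \hat U_t\|_2 < \bar\varepsilon\, \Delta t\, \|H_{\mathrm{p}} + \beta_t H_{\mathrm{d}}\|_2$.

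Next I would accumulate the layers by a telescoping (hybrid) argument: expanding $\mathbf{U}(\varepsilon) - \hat{\mathbf{U}}$ as a sum of $l$ terms, in each of which exactly one ideal layer is replaced by its perturbed counterpart while all remaining factors stay unitary, so that each term is sandwiched between unitaries and its spectral norm equals the corresponding per-layer deviation. This yields
\begin{equation}
    \|\mathbf{U}(\varepsilon) - \hat{\mathbf{U}}\|_2 \le \sum_{t=1}^{l} \|U_t(\varepsilon_t) - \hat U_t\|_2 < \bar\varepsilon \sum_{t=1}^{l} \Delta t\, \|H_{\mathrm{p}} + \beta_t H_{\mathrm{d}}\|_2 = \bar\varepsilon\, L_{\mathrm{FALQON}},
\end{equation}
and hence $\|\ket{\psi(\varepsilon)} - \ket{\hat\psi}\|_2 \le \bar\varepsilon\, L_{\mathrm{FALQON}}$. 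This accumulation is the step I expect to require the most care: one must verify that the global deviation grows only linearly in the number of layers and that it collects the combined-Hamiltonian norm $\|H_{\mathrm{p}} + \beta_t H_{\mathrm{d}}\|_2$, rather than the looser sum $\|H_{\mathrm{p}}\|_2 + |\beta_t|\,\|H_{\mathrm{d}}\|_2$ that would arise from treating the two exponentials in a layer separately.

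Finally I would convert the distance bound into a fidelity bound. Since both outputs are unit vectors, $\|\ket{\psi(\varepsilon)} - \ket{\hat\psi}\|_2^2 = 2 - 2\,\mathrm{Re}\braket{\hat\psi | \psi(\varepsilon)}$, so that $\mathrm{Re}\braket{\hat\psi | \psi(\varepsilon)} = 1 - \tfrac{1}{2}\|\ket{\psi(\varepsilon)} - \ket{\hat\psi}\|_2^2$. Combining $|\braket{\psi(\varepsilon) | \hat\psi}| \ge \mathrm{Re}\braket{\hat\psi | \psi(\varepsilon)}$ with the distance bound from the previous step gives $|\braket{\psi(\varepsilon)|\hat\psi}| \ge 1 - \tfrac{1}{2}\bar\varepsilon^2 L_{\mathrm{FALQON}}^2$, which is exactly the claimed inequality.
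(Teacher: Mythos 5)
Your proof is correct, and it takes a genuinely different (more self-contained) route than the paper. The paper's proof is assembled from two cited results: it invokes \cite[Th.~2.1]{Berberich:2023zau}, which converts any Lipschitz bound $\|\ket{\hat\psi}-\ket{\psi(\varepsilon)}\|_2 \le L\|\varepsilon\|_\infty$ into the fidelity bound $1-L^2\bar\varepsilon^2/2$, together with the continuous-time Lipschitz bound of \cite{funcke2024robustnessoptimalquantumannealing}, $\|\ket{\hat\psi}-\ket{\psi(\varepsilon)}\|_2 \le \int_0^T\|H(\tau)\|_2\,d\tau\,\|\varepsilon\|_\infty$, and then simply evaluates the integral for the piecewise-constant FALQON input to obtain $L_{\mathrm{FALQON}}$. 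You instead prove both ingredients from scratch in discrete time: the commuting factorization $U_t(\varepsilon_t)=e^{-i\varepsilon_t\Delta t(H_{\mathrm{p}}+\beta_t H_{\mathrm{d}})}\hat U_t$ combined with the eigenvalue estimate $\|e^{-iA}-I\|_2\le\|A\|_2$ gives the per-layer deviation, the telescoping expansion gives linear accumulation across layers, and the unit-vector identity $\|\ket{\psi(\varepsilon)}-\ket{\hat\psi}\|_2^2 = 2-2\,\mathrm{Re}\braket{\hat\psi|\psi(\varepsilon)}$ replaces the cited fidelity conversion (this is essentially the argument behind the cited theorem itself). Both routes land on exactly the same constant $L_{\mathrm{FALQON}}$ in \eqref{falqbound}. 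What the paper's route buys is brevity and a formulation that also covers errors varying continuously in time; what your route buys is an elementary, fully self-contained argument at the level of the layered circuit, which makes transparent why the combined norm $\|H_{\mathrm{p}}+\beta_t H_{\mathrm{d}}\|_2$ appears rather than the looser $\|H_{\mathrm{p}}\|_2+|\beta_t|\,\|H_{\mathrm{d}}\|_2$. One point worth flagging: your layer model $\hat U_t=e^{-i(H_{\mathrm{p}}+\beta_t H_{\mathrm{d}})\Delta t}$ with a single error per layer matches the error model \eqref{feedbackCCE} that the paper analyzes, not the Trotterized implementation $U_\mathrm{d}(\beta_t)U_\mathrm{p}$ of \eqref{dtschroedinger}; the paper's own proof carries the same implicit identification, so you are consistent with its treatment, but placing independent errors on the two Trotter factors separately would yield the looser per-layer constant instead.
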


\begin{proof}

Berberich, Fink and Holm \cite[Th. 2.1]{Berberich:2023zau} proved that if we are able to find a Lipschitz bound $L$ such that
\begin{equation}
    ||\ket{\hat\psi} - \ket{\psi(\varepsilon)}||_2 \le L ||\varepsilon||_\infty\text{, }\forall \varepsilon \in \mathbb{R}^l,
\end{equation}
then the fidelity can be bounded as 
\begin{equation}
    |\braket{\psi(\varepsilon)| \hat\psi}| \ge 1-\frac{L^2\bar\varepsilon^2}{2}. \label{berberbound}
\end{equation}
Therefore we need to find a Lipschitz bound for FALQON in order to bound fidelity. 
The work \cite{funcke2024robustnessoptimalquantumannealing} determines a Lipschitz bound for the error assuming continous time evolution of a quantum system with time-varying Hamiltonian $H(t)$ affected by coherent control error.
In particular, \cite{funcke2024robustnessoptimalquantumannealing} shows that 
\begin{equation}
    ||\ket{\hat\psi}-\ket{\psi(\varepsilon)}||_2 \le \int_{0}^{T} ||H(\tau)||_2 d\tau ||\varepsilon||_{\infty}.
\end{equation}
Recalling the system dynamics \eqref{Schrodinger} and the input law \eqref{input_law} we get:
\begin{align}
    \nonumber & L_{\mathrm{FALQON}} = \int_{0}^{T} ||H(\tau)||_2 d\tau \\
    \nonumber & = \int_{0}^{T} ||H_{\mathrm{p}} +\beta(\tau) H_{\mathrm{d}}||_2 d\tau \\
    \nonumber & \le  \sum_{t=1}^{l} \int_{(t-1)\Delta t}^{t\Delta t} ||H_{\mathrm{p}} + \beta_t H_{\mathrm{d}}||_2 d\tau  \\
    & =  \sum_{t=1}^{l} \Delta t ||H_{\mathrm{p}} + \beta_t H_{\mathrm{d}}||_2, 
\end{align} 
with $\beta_t=A_{t-1}=A((t-1)\Delta t)$.

By inserting $L_{\mathrm{FALQON}}$ into \eqref{berberbound} we prove the lemma. 
\end{proof}

\subsection{Robust FALQON algorithm}
Having found a bound for the fidelity in Lemma 1, we are able to design a robustified input law for FALQON. 
In order to improve robustness, we introduce a penalty term in the cost function that increases as the upper bound increases. 
Our penalty term is chosen to be proportional to $\beta(t)^2$ because this penalizes an upper bound on \eqref{falqbound}, namely
\begin{align}
    & L_{\mathrm{FALQON}} = \sum_{t=1}^{l} \Delta t ||H_{\mathrm{p}} + \beta_t H_{\mathrm{d}}||_2 \nonumber \\
    & \le \sum_{t=1}^{l} \Delta t ||H_{\mathrm{p}}||_2 + ||\beta_t H_{\mathrm{d}}||_2.
\end{align}
We compute the input with an updated Lyapunov function
\begin{equation}
    V_\lambda(\ket{\psi(t)}, \beta(\cdot))=V(\ket{\psi(t)}) + \lambda \int_{0}^{t} \beta^2(\tau) d\tau.
\end{equation} 
Note that this approach is analogous to the regularization proposed for VQAs in \cite{Berberich:2023zau}.

The derivative of the updated Lyapunov function is 
\begin{align}
   &  \dfrac{d}{d_t}V_\lambda(\ket{\psi(t), \beta(\cdot)})  \nonumber \\ 
   & =  \bra{\psi(t)} i[H_{\mathrm{p}}+\beta(t)H_{\mathrm{d}},H_{\mathrm{p}}] \ket{\psi(t)} + \lambda \beta^2(t)   \nonumber \\ 
   & =  \beta(t)\bra{\psi(t)} i[H_{\mathrm{d}},H_{\mathrm{p}}] \ket{\psi(t)} + \lambda \beta^2(t) \label{robder}.
\end{align}
Since \eqref{robder} has to be negative to make the cost decrease we can rewrite it using the definition of $A(t)$ and solve the inequality
\begin{equation}
    \lambda  \beta^2(t) + A(t)\beta(t) < 0.
\end{equation}
The interval for $\beta(t)$ in which \eqref{robder} takes negative values is 
\begin{equation}
    \beta(t) \in \begin{cases}
        [\frac{-A(t)}{\lambda}, 0] \text{ if } A(t)>0, \\
        [0, \frac{-A(t)}{\lambda}] \text{ else}
    \end{cases} \label{interval}.
\end{equation}
Furthermore, by taking the derivative of \eqref{robder} with respect to $\beta$ we can easily verify that a minimum for \eqref{robder} can be found in 
\begin{equation}
    A(t)+2\lambda\beta(t)=0 \Leftrightarrow \beta (t)=-\frac{A(t)}{2\lambda},
\end{equation}
which is exactly the center of interval \eqref{interval}.

This feedback law for the parameter $\beta$ fits the requirements stated in \eqref{input_law}, since it is exactly the same choice for $f(A(t),t)$ used in their experiments ($f(A(t),t)=A(t)$), with a suitable choice of $\lambda$. According to our derivation, the regularization term influences the input choice as a weight $w=\frac{1}{2\lambda}$, meaning that increasing regularization decreases the norm of $\beta$ and, thereby, improves robustness.
Noticeably, the choice of larger values for $\lambda$ admits a trade-off between robustness and optimality, leading to a slower convergence due to the reduced gain.
It is immediate to prove that for $\lambda= \frac{1}{2}$, $\beta(t)=-A(t)$, which is exactly the input law chosen in \cite{MagannFALQON}.

\section{NUMERICAL RESULTS}
In this section, we  test FALQON on a MaxCut problem. MaxCut aims at finding a partition of a graph that maximizes the number of edges cut. 
We  test the algorithm on an 8-node random regular graph, built according to the Erdos-Renyi model. The graph is unweighted. Furthermore, the MaxCut problem on the graph we use has non-unique optimal solutions.

The discretization step is selected empirically. We choose $\Delta t=0.05s$ to achieve good performance. 
Coherent control errors are sampled uniformly from the interval $[-\bar{\varepsilon},+\bar{\varepsilon}]$.
The parameter $\bar{\varepsilon}$ varies among experiments. 

In Section IV.A, we show the robustness properties of FALQON with respect to systematic coherent control errors.
In Section IV.B, we analyze the robustness of FALQON with respect to independent coherent control errors and compare it to robust FALQON. 
All the simulations are developed using Pennylane \cite{bergholm2022pennylaneautomaticdifferentiationhybrid}. The code is available at the following URL: \url{https://github.com/MirkoLegnini/Robust_FALQON}. 
\subsection{Systematic coherent control errors}
The purpose of this experiment is to show that systematic coherent control errors do not affect the asymptotic convergence properties of FALQON. 
We run the experiment with a maximum depth for the circuit $l=1000$. 
We repeat the experiment for increasingly strong error levels, from $\bar{\varepsilon}=0.1$ up to $\bar{\varepsilon}=0.9$.  
Figure \ref{fig1} shows the error on the cost over iterations. 
\begin{figure}    
    \includegraphics[width=0.45\textwidth]{./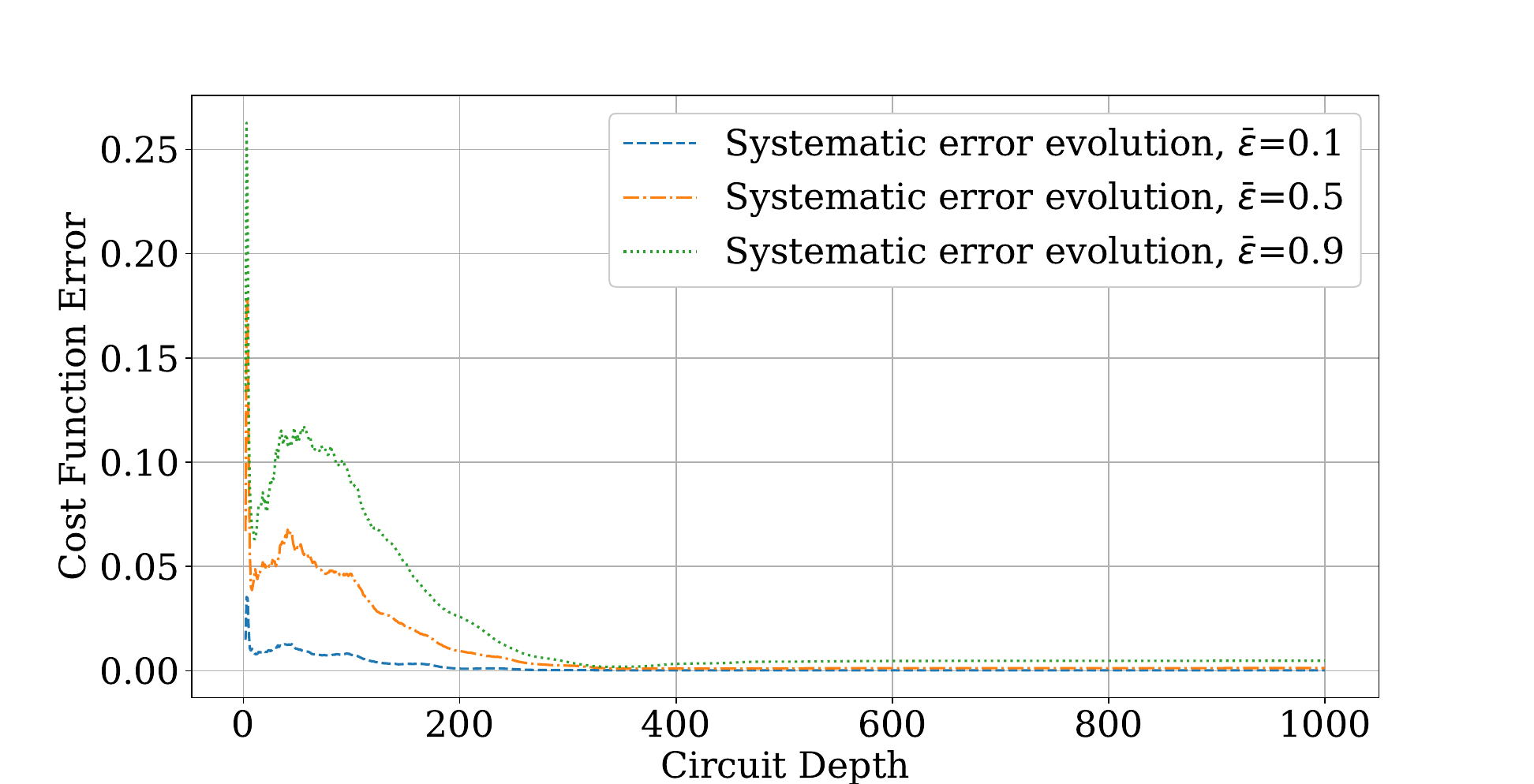}
    \caption{\label{fig1}Evolution of error on cost over layers under systematic CCE}
\end{figure}
The results show that, regardless of the entity of the noise, FALQON is asymptotically robust to systematic coherent control errors.
There is, however, a small difference in the behavior during the transients.

\subsection{Independent coherent control errors}
Here, we investigate the robustness of FALQON with respect to independent coherent control errors. 
We run the experiment with a maximum depth for the circuit $l=200$. 
We run the experiment 50 times with different noise samples and compute the sample standard deviation.
We repeat the experiment for robust FALQON with $\lambda=1.0$ and for the standard FALQON (equivalent to $\lambda=0.5$).
\begin{figure}
    \begin{subfigure}[h]{0.5\textwidth}
        \includegraphics[width=\textwidth]{./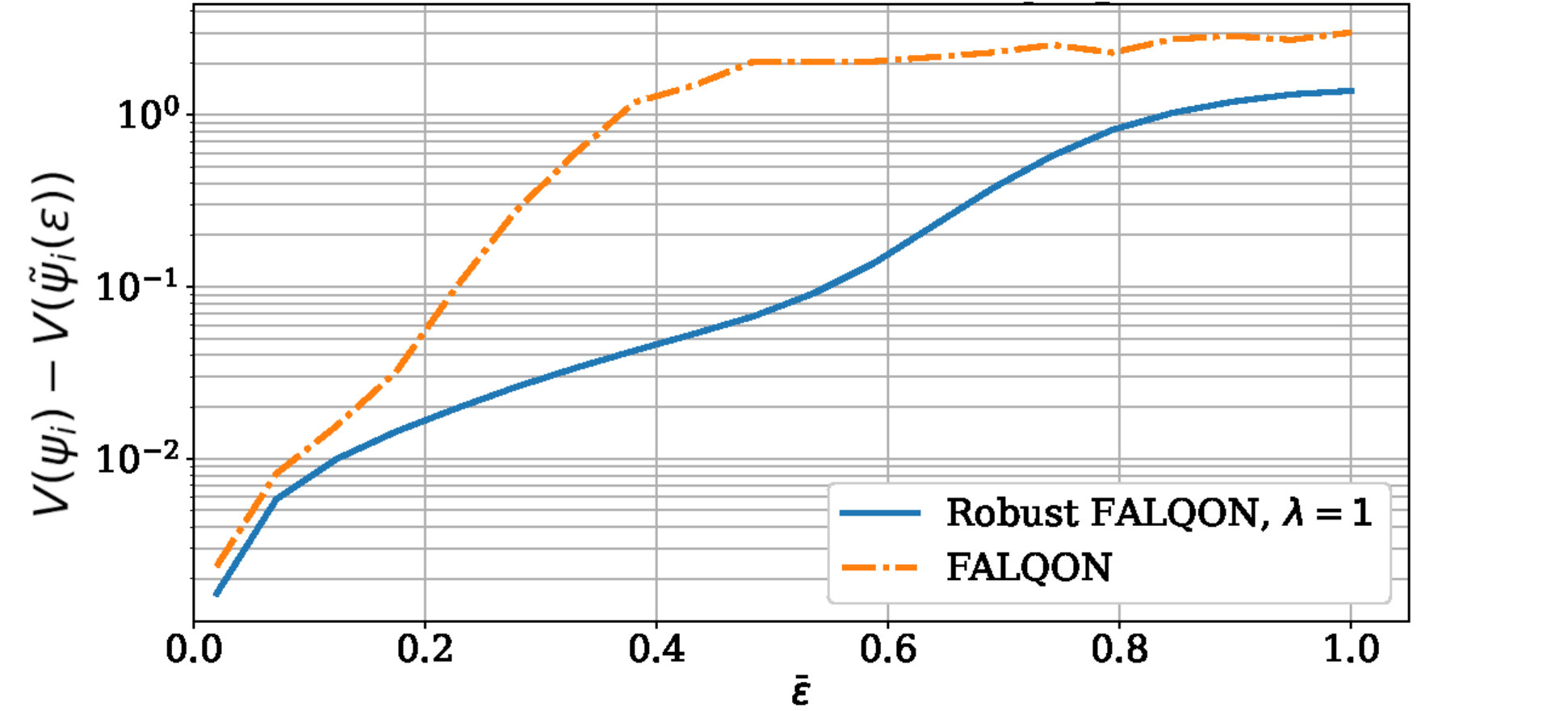}
        \caption{\label{fig:image2} Errors on final layer for varying $\bar\varepsilon$}    
    \end{subfigure}
\\

    \begin{subfigure}[h]{0.5\textwidth}
        \includegraphics[width=\textwidth]{./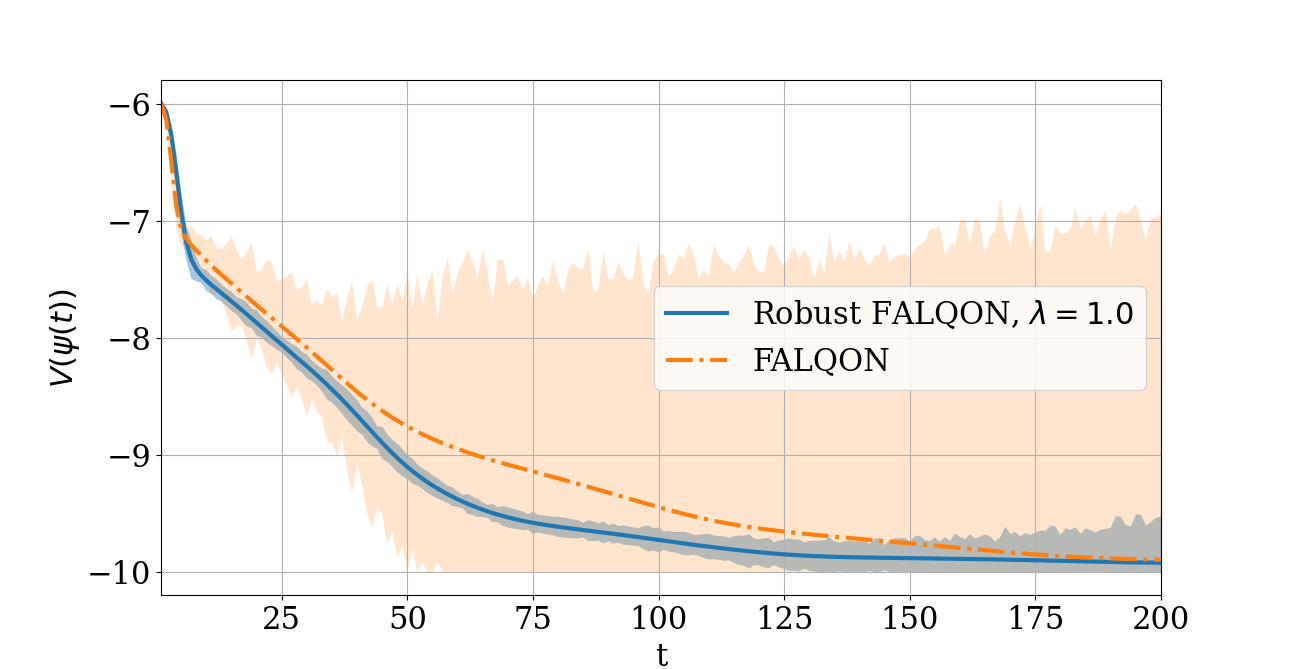}
        \caption{\label{fig:image1}Evolution of cost over layers, $\bar\varepsilon=0.25$ }    
    \end{subfigure}

    \caption{Errors on the final layer for varying $\bar\varepsilon$ (Figure \ref{fig:image2}) and evolution of cost over layers under independent CCE (Figure \ref{fig:image1})}
\end{figure}

The numerical results show that noise rejection improves with a suitable choice of $\lambda$.
In particular, Figure \ref{fig:image2} shows that robust FALQON has considerably improved performance.
Figure \ref{fig:image1} also suggests that a robust FALQON can also accelerate the convergence of the algorithm.

\section{CONCLUSIONS AND FUTURE WORKS}

In this work, we investigated the robustness properties of FALQON \cite{MagannFALQON}.
In the first part, we examined the impact of coherent control errors on the algorithm's convergence. We found that systematic coherent control errors do not affect asymptotic stability for the algorithm.
We  then provided an upper bound for the fidelity of the final state in the presence of independent coherent control errors. 
Starting from this upper bound, we designed a robust version of FALQON. Taking inspiration from classical optimization, the proposed idea was to introduce a penalty term related to the norm of the learned parameters. 
We provided numerical evidence for our theoretical results. 
The numerical results demonstrate that, for a suitable parameter choice, the robust FALQON admits superior performance in the presence of noise.
\\
Possible further developments involve an analysis of the impact of the discretization step $\Delta t$ on of the algorithm, which could lead to an optimal discretization step selection to improve robustness and convergence.

\bibliographystyle{ieeetr}
\bibliography{bibliography}
\end{document}